\newcommand\katz{\mathbf{c}}
\newcommand\eg{e.g.\ }
\newcommand\ie{i.e.\ }
\newcommand\etal{et al.\ }
\newcommand\etals{et al.'s\ }
\author{Alexander van der Grinten}%
{Department of Computer Science, Humboldt-Universität zu Berlin, Germany}%
{avdgrinten@hu-berlin.de}{}{}
\author{Elisabetta Bergamini}%
{Karlsruhe Institute of Technology (KIT), Germany}%
{}{}{}
\author{Oded Green}%
{School of Computational Science and Engineering, 
Georgia Institute of Technology, USA}%
{ogreen@gatech.edu}{}{}
\author{David A. Bader}%
{School of Computational Science and Engineering, Georgia Institute of Technology, USA}%
{bader@gatech.edu}{}{}
\author{Henning Meyerhenke}%
{Department of Computer Science, Humboldt-Universität zu Berlin, Germany}%
{meyerhenke@hu-berlin.de}{}{}
\authorrunning{A. van der Grinten, E. Bergamini, O. Green, D. A. Bader, H. Meyerhenke}
\title{Scalable Katz Ranking Computation in Large Static and Dynamic Graphs}
\titlerunning{Scalable Katz Ranking Computation in Large Static and Dynamic Graphs}
\subjclass{\ccsdesc[500]{Theory of computation~Dynamic graph algorithms},
\ccsdesc[300]{Theory of computation~Parallel algorithms}}
\keywords{network analysis; Katz centrality; top-$k$ ranking; dynamic graphs; parallel algorithms}
\begin{document}

\maketitle

\begin{abstract}
	Network analysis defines a number of centrality measures
	to identify the most central nodes in a network.
	Fast computation of those measures is a major
	challenge in algorithmic network analysis.
	Aside from closeness and betweenness,
	Katz centrality is one of the established centrality measures.
	In this paper, we consider the problem of computing
	rankings for Katz centrality.
	In particular, we propose upper and lower bounds on
	the Katz score of a given node. While previous approaches
	relied on numerical approximation or heuristics
	to compute Katz centrality rankings, we construct an algorithm that iteratively
	improves those upper and lower bounds until a correct Katz
	ranking is obtained.
	We extend our algorithm to dynamic graphs while maintaining its
	correctness guarantees.
	Experiments demonstrate that our static graph algorithm outperforms
	both numerical approaches and heuristics with speedups between
	$1.5\times$ and $3.5\times$, depending on the desired quality guarantees.
	Our dynamic graph algorithm improves upon the static algorithm
	for update batches of less than 10000 edges.
	We provide
	efficient parallel CPU and GPU implementations of our algorithms
	that enable near real-time Katz centrality computation for graphs
	with hundreds of millions of nodes in fractions of seconds.
\end{abstract}


\section{Introduction}
	Finding the most important nodes of a network is a major task in network analysis. To this end,
	numerous centrality measures have been introduced in the literature.  Examples of well-known
	measures are betweenness (which ranks nodes according to their participation in the shortest
	paths of the network) and closeness (which indicates the average shortest-path distance to other
	nodes).  A major limitation of both measures is that they are based on the assumption that
	information flows through the networks following shortest paths only. However, this is often not
	the case in practice; think, for example, of traffic on street networks: it is easy to imagine
	reasons why drivers might prefer to take slightly longer paths.  On the other hand, it is also
	quite unlikely that \textit{much} longer paths will be taken.
	 
	Katz centrality~\cite{katz1953defn} accounts for this
	by summing all walks starting from a node, but weighting them based on their length. More
	precisely, the weight of a walk of length $i$ is $\alpha^i$, where $\alpha$ is some attenuation
	factor smaller than $1$. Thus, naming $\omega_i(v)$ the number of walks of length $i$ starting
	from node $v$, the Katz centrality of $v$ is defined as
	\begin{equation}
		\label{eq:katz}
		\katz(v) := \sum_{i=1}^\infty \omega_i(v)\,\alpha^i
	\end{equation}
	or equivalently: $\katz = \left(\sum_{i=1}^\infty A^i\,\alpha^i\right)\vec{I}$, where $A$ is the
	adjacency matrix of the graph and $\vec{I}$ is the vector consisting only of $1$s.
	This can be restated as a Neumann series, resulting in
	the closed-form expression $\katz = \alpha A (I - \alpha A)^{-1} \vec{I}$,
	where $I$ is the identity
	matrix.
	Thus, Katz centrality can be computed exactly by solving the linear system
	\begin{equation}
		\label{eq:la}
		(I - \alpha A)\,\mathbf{z} = \vec{I} ~,
	\end{equation}
	followed by evaluating $\katz = \alpha A\,\mathbf{z}$.
	We call this approach the \emph{linear algebra formulation}.
	In practice, the solution to Eq.~(\ref{eq:la}) is numerically approximated
	using iterative solvers for linear systems.
	While these solvers yield solutions of good quality,
	they can take hundreds of iterations to converge \cite{nathan2017guarantee}.
	Thus, in terms of running time, those algorithms can be impractical for
	today's large networks, which often have millions of nodes and billions of edges.

	Instead, Foster \etals~\cite{foster2001iter} algorithm
	estimates Katz centrality
	iteratively by computing partial sums of the series from Eq.~(\ref{eq:katz})
	until a stopping criterion is reached.
	Although very
	efficient in practice, this method has no guarantee on the correctness of the ranking it finds,
	not even for the top nodes.
	Thus, the approach is ineffective for applications where only
	a subset of the most central nodes is needed or when accuracy is needed.
	As this is indeed the case in many applications, several
	top-$k$ centrality algorithms have been proposed recently for
	closeness~\cite{bergamini2016topclose}
	and betweenness~\cite{lee2014topbetween}.
	Recently, a top-$k$ algorithm for Katz centrality~\cite{nathan2017guarantee}
	was suggested. That algorithm
	still relies on solving Eq.~(\ref{eq:la}); however, it reduces
	the numerical accuracy that is required to obtain a top-$k$ rating.
	Similarly, Zhan \etal~\cite{zhan2017topkatz} propose a heuristic method to exclude
	certain nodes from top-$k$ rankings but do not present
	algorithmic improvements on the actual Katz computation.

	\subparagraph*{Dynamic graphs.}
	Furthermore, many of today's real-world networks,
	such as social networks and web graphs, are dynamic in nature and some
	of them evolve over time at a very quick pace. For such networks, it is often impractical to
	recompute centralities from scratch after each graph modification. Thus, several dynamic graph
	algorithms that efficiently update centrality have been introduced for
	closeness~\cite{bisenius2018dynclose} and
	betweenness~\cite{lee2016dynbetween}.
	Such algorithms usually work well in practice, because they
	reduce the computation to the part of the graph that has actually been affected.
	This offers potentially large speedups compared to recomputation.
	For Katz centrality, dynamic algorithms
	have recently been proposed by Nathan \etal~\cite{nathan2017dynsolve,nathan2018dynperson}.
	However, those algorithms rely on heuristics and are unable to
	reproduce the exact Katz ranking after dynamic updates.

	\subparagraph*{Our contribution.}
		We construct a vertex-centric
		algorithm that computes Katz centrality by iteratively
		improving upper and lower bounds on the centrality scores (see Section~\ref{sec:static}
		for the construction of this algorithm).
		While the computed centralities are approximate, our algorithm guarantees
		the correct ranking. We extend (in Section \ref{sec:dyn})
		this algorithm to dynamic graphs while preserving the guarantees of the static algorithm.
		An extensive experimental evaluation (see Section~\ref{sec:exp})
		shows that (i) our new algorithm outperforms Katz algorithms
		that rely on numerical approximation with speedups between $1.5\times$ and $3.5\times$,
		depending on the desired correctness guarantees,
		(ii) our algorithm has a speedup in the same order of magnitude
		over the widely-used heuristic
		of Foster \etal~\cite{foster2001iter} while improving accuracy,
		(iii) our dynamic graph algorithm improves upon static recomputation
		of Katz rankings for batch sizes of less than 10000 edges
		and (iv) efficient parallel CPU and GPU implementations of our algorithm
		allow near real-time computation of Katz centrality in fractions
		of seconds even for very large graphs. In particular, our GPU implementation
		achieves speedups of more than $10\times$ compared to a 20-core CPU implementation.


\section{Preliminaries}
\label{sec:prelim}

\subsection{Notation}
	\begin{description}
		\item[Graphs] In the following sections, we assume that
			$G = (V, E)$ is the input graph to our algorithm.
			Unless stated otherwise, we
			assume that $G$ is directed.
			For the purposes of Katz centrality, undirected graphs can be modeled
			by replacing each undirected edge with two directed edges in reverse directions.
			For a node $x \in V$, we denote the \emph{out-}degree of $x$ by
			$\deg(x)$. The maximum out-degree of any node in $G$ is denoted by $\deg_\mathrm{max}$.

		\item[Katz centrality] The Katz centrality of the nodes of $G$
			is given by Eq.~(\ref{eq:katz}). With $\katz_i(v)$ we denote
			the $i$-th partial sum of Eq.~(\ref{eq:katz}).
			Katz centrality is not defined for
			arbitrary values of $\alpha$. In general, Eq.~(\ref{eq:katz}) converges
			for $\alpha < \frac1{\sigma_\mathrm{max}}$, where
			$\sigma_\mathrm{max}$ is the largest singular value of
			the adjacency matrix $A$ (see~\cite{katz1953defn}).
			
			Katz centrality can also be defined by counting \emph{inbound}
			walks in $G$~\cite{katz1953defn,newman2010networks}.
			For this definition, $\omega_i(x)$ is replaced by
			the number of walks of length $i$ that end in $x \in V$.
			Indeed, for applications like web graphs, nodes that are the target
			of many links intuitively should be considered more central than nodes that
			only have many links
			themselves\footnote{This is a central idea behind the
				PageRank~\cite{brin1998google} metric.}.
			However, as inbound Katz centrality coincides with the outbound Katz centrality
			of the reverse graph, we will not specifically consider it in this paper.
			\todo{Maybe remove inbound Katz.}
	\end{description}

\subsection{Related work}
	Most algorithms that are able to compute Katz scores with
	approximation guarantees are based on the linear algebra formulation and
	compute a numerical solution to Eq.~(\ref{eq:la}).
	Several approximation algorithms have been developed in order to
	decrease the practical running times of this formulation
	(\eg based on low-rank approximation~\cite{acer2009lowrank}).
	Nathan \etal~\cite{nathan2017guarantee} prove a relationship between
	the numerical approximation quality of Eq.~(\ref{eq:la})
	and the resulting Katz ranking quality.
	While this allows computation of top-$k$ rankings with reduced numerical
	approximation quality, no significant speedups can be expected
	if full Katz rankings are desired.

	Foster \etal~\cite{foster2001iter} present a vertex-centric heuristic
	for Katz centrality:
	They propose to determine Katz centrality by computing the recurrence
	$c_{i+1} = \alpha A\ c_i + \vec{I}$.
	The computation is iterated until either a fixed
	point\footnote{Note that a true fixed point will not be reached using this method
		unless the graph is a DAG.}
	or a predefined number of iterations is reached.
	This algorithm performs well in
	practice; however, due to the heuristic nature of the stopping condition,
	the algorithm does not give any correctness guarantees.

	Another paper from Nathan \etal~\cite{nathan2018dynperson}
	discusses an algorithm for a \enquote{personalized} variant of Katz centrality.
	Our algorithm uses a similar iteration scheme but differs in
	multiple key properties of the algorithm: Instead of considering
	personalized Katz centrality, our algorithm computes the usual,
	\enquote{global} Katz centrality. While Nathan \etal give a global
	bound on the quality of their solution, we are able to compute
	per-node bounds that can guarantee the correctness of our ranking.
	Finally, Nathan \etals dynamic update procedure is a heuristic
	algorithm without correctness guarantee, although its ranking quality
	is good in practice. In contrast to that,
	our dynamic algorithm reproduces exactly the results of the
	static algorithm.
	\todo{Maybe shorten this paragraph.}


\section{Iterative improvement of Katz bounds}
	\label{sec:static}

\subsection{Per-node bounds for Katz centrality}
	The idea behind our algorithm is to compute upper and lower bounds on the centrality
	of each node. Those bounds are iteratively improved.
	We stop the iteration once an application-specific stopping criterion is reached.
	When that happens,
	we say that the algorithm \emph{converges}.
	
	Per-node upper and lower bounds allow us to rank nodes against each other:
	Let $\ell_r(x)$ and $u_r(x)$ denote respectively lower and upper bounds
	on the Katz score of node $x$ after iteration $r$.
	An explicit construction of those bounds will be given later in this section;
	for now, assume that such bounds exist.
	Furthermore, let $w$ and $v$ be two nodes; without loss of generality,
	we assume that $w$ and $v$ are chosen such that $\ell_r(w) \geq \ell_r(v)$.
	If $\ell_r(w) > u_r(v)$, then $w$ appears in the Katz centrality
	ranking before $v$ and we say that $w$ and $v$ are \emph{separated} by the bounds
	$\ell_r$ and $u_r$.
	In this context, it should be noted that per-node bounds do not allow us to prove that
	the Katz scores of two nodes are
	equal\footnote{In theory, the linear algebra formulation is able to prove that
		the score of two nodes is indeed equal.
		However, in practice, limited floating point precision
		limits the usefulness of this property.}.
	However, as the algorithm still needs to be able to rank nodes $x$ that share
	the same $\ell_r(x)$ and $u_r(x)$ values,
	we need a more relaxed concept of separation. Therefore:
	\begin{definition}
		In the same setting as before, let $\epsilon > 0$.
		We say that $w$ and $v$ are $\epsilon$-\emph{separated},
		if and only if
		\begin{equation}
			\label{eq:epsilon_sep}
			\ell_r(w) > u_r(v) - \epsilon ~.
		\end{equation}
	\end{definition}
	Intuitively, the introduction
	of $\epsilon$ makes the $\epsilon$-condition easier to fulfill than the
	separation condition: Indeed, separated pairs of nodes are
	also $\epsilon$-separated for every $\epsilon > 0$.
	In particular, $\epsilon$-separation allows us to construct Katz rankings even in the presence
	of nodes that have the same Katz score: Those nodes are
	never separated, but they will eventually be $\epsilon$-separated for every $\epsilon > 0$.

	In order to actually construct rankings, it is sufficient to notice that
	once all pairs of nodes are $\epsilon$-separated, sorting the nodes
	by their lower bounds $\ell_r$ yields a correct Katz ranking,
	except for pairs of nodes with a difference in Katz score of less than $\epsilon$.
	Thus, using this definition, we can discuss possible stopping criteria for the algorithm:

	\begin{description}
		\item[Ranking criterion.] Stop once all nodes are $\epsilon$-separated from each other.
			This guarantees that the ranking
			is correct, except for nodes with scores that are very close to each other.
		\item[Top-$k$ criterion.] Stop once the top-$k$ nodes are $\epsilon$-separated from
			each other and from all other nodes. For $k = n$ this criterion reduces to the
			ranking criterion.
		\item[Score criterion.] Stop once the difference between the upper and lower bound
			of each node becomes less than $\epsilon$. This guarantees that the
			Katz centrality of each node is correct up to an additive constant of $\epsilon$.
		\item[Pair criterion.] Stop once two given nodes $u$ and $v$ are
			$\epsilon$-separated.
	\end{description}

	First, we notice that a simple lower bound on the
	Katz centrality of a node $v$ can be obtained by truncating the series in Eq.~(\ref{eq:katz}) after
	$r$ iterations, hence, $\ell_r(v) := \sum_{i=1}^{r} \omega_i(v) \alpha^i$ is a
	lower bound on $\katz(v)$.
	For undirected graphs, this lower bound can be improved to
	$\sum_{i=1}^r \omega_i(v) \alpha^i + \omega_r(v) \alpha^{r+1}$,
	as any walk of length $r$ can be extended to a walk of length $r+1$ with the
	same starting point by repeating its last edge with reversed
	direction.

	\begin{theorem}
		\label{thm:correctness}
		Let $\gamma = \frac{\deg_\mathrm{max}}{1 - \alpha \deg_\mathrm{max}}$.
		For any $r \geq 1$, $v \in V$ and $\alpha < \frac{1}{\deg_\mathrm{max}}$,
		the value
		\[ u_r(v) := \sum_{i=1}^{r} \alpha^i \omega_i(v) + \alpha^{r+1}\omega_r(v) \gamma \]
		is an upper bound on $\katz(v)$.
	\end{theorem}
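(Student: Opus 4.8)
The plan is to split the infinite series defining $\katz(v)$ into the partial sum that already appears in $u_r(v)$ plus a tail, and to bound the tail by a geometric series. Concretely, write
\[
	\katz(v) = \sum_{i=1}^{r} \alpha^i \omega_i(v) + \sum_{i=r+1}^{\infty} \alpha^i \omega_i(v),
\]
so it suffices to show that the tail $\sum_{i=r+1}^{\infty} \alpha^i \omega_i(v)$ is at most $\alpha^{r+1}\omega_r(v)\gamma$. Along the way I would also note that the whole series converges in the first place: since every walk of length $i$ starting at $v$ extends a walk of length $i-1$ starting at $v$ by one out-edge, we get $\omega_i(v) \le \deg_\mathrm{max}\,\omega_{i-1}(v)$ and hence $\omega_i(v) \le \deg_\mathrm{max}^{\,i}$, so $\sum_i \alpha^i \omega_i(v) \le \sum_i (\alpha \deg_\mathrm{max})^i < \infty$ because $\alpha \deg_\mathrm{max} < 1$ by hypothesis.

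The key combinatorial step is the bound $\omega_{i+1}(v) \le \deg_\mathrm{max}\,\omega_i(v)$ for all $i$. This holds because each walk of length $i+1$ starting at $v$ arises uniquely by appending one out-edge to a walk of length $i$ starting at $v$, and a walk of length $i$ ending at some node $x$ has exactly $\deg(x) \le \deg_\mathrm{max}$ such extensions. Iterating this inequality gives $\omega_{r+j}(v) \le \deg_\mathrm{max}^{\,j}\,\omega_r(v)$ for every $j \ge 0$.

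Substituting this into the tail and re-indexing with $j = i - r$ yields
\[
	\sum_{i=r+1}^{\infty} \alpha^i \omega_i(v)
	= \sum_{j=1}^{\infty} \alpha^{r+j} \omega_{r+j}(v)
	\le \omega_r(v)\,\alpha^{r} \sum_{j=1}^{\infty} (\alpha \deg_\mathrm{max})^{j}
	= \omega_r(v)\,\alpha^{r} \cdot \frac{\alpha \deg_\mathrm{max}}{1 - \alpha \deg_\mathrm{max}}
	= \alpha^{r+1}\omega_r(v)\,\gamma,
\]
where the geometric series converges precisely because $\alpha < \frac{1}{\deg_\mathrm{max}}$, and the final equality is just the definition of $\gamma$. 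Adding back the partial sum gives $\katz(v) \le u_r(v)$, as claimed.

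I do not expect a serious obstacle here: the argument is elementary. The only points that need care are (i) stating the walk-extension inequality $\omega_{i+1}(v) \le \deg_\mathrm{max}\,\omega_i(v)$ correctly — it bounds the out-degree of the \emph{endpoint} of a length-$i$ walk, not the degree of $v$ — and (ii) the index bookkeeping, so that the geometric tail starts at $j=1$ and contributes the extra factor of $\alpha$ that turns $\alpha^{r}$ into $\alpha^{r+1}$.
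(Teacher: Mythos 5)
Your proposal is correct and follows essentially the same route as the paper's proof: bound the tail of the series via the walk-extension inequality $\omega_{i+1}(v) \leq \deg_\mathrm{max}\,\omega_i(v)$ (which the paper derives by summing over endpoints $x$ of length-$i$ walks), iterate it, and sum the resulting geometric series to obtain $\alpha^{r+1}\omega_r(v)\gamma$. The only cosmetic difference is that the paper writes the geometric sum as $\frac{1}{1-\alpha\deg_\mathrm{max}}-1$ rather than $\frac{\alpha\deg_\mathrm{max}}{1-\alpha\deg_\mathrm{max}}$, which is the same quantity.
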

	\begin{proof}
		First, let $S_i(v)$ be the set of nodes $x$ for which there exists a walk of length $i$
		starting in $v$ and ending in $x$.
		Each walk of length $i+1$ is the concatenation of a walk of
		length $i$ ending in $x \in S_i(v)$
		and an edge $(x, y)$, where $y$ is some neighbor of $x$.
		Let $\omega_i(v, x)$ denote the number of walks of length $i$
		that start in $v$ and end in $x$. Thus, we can write 
		\begin{equation}
			\label{eq:deg_ineq}
			\omega_{i+1}(v) = \sum_{x \in S_i(v)} \deg(x)\ \omega_i(v, x)
				\leq \sum_{x \in S_i(v)} \deg_\mathrm{max}\ \omega_i(v, x)
				= \deg_\mathrm{max}\ \omega_i(v) ~.
		\end{equation}
		By applying induction to the previous inequality,
		it is easy to see that, for any $j > 1$,
		\[ \omega_{i+j}(v) \leq (\deg_\mathrm{max})^j \omega_i(v) ~.\]
		Discarding the first $r$ terms of the sum in Eq.~(\ref{eq:katz})
		then yields
		\begin{align*}
			\sum_{i=r+1}^{\infty} \alpha^i \omega_i(v)
			& \leq \sum_{j=1}^{\infty} \alpha^{r+j} (\deg_\mathrm{max})^j \omega_r(v)
			= \alpha^r \omega_r(v) \sum_{j=1}^{\infty} (\alpha \deg_\mathrm{max})^j\\
			& = \alpha^r \omega_r(v) \left( \frac{1}{1- \alpha \deg_\mathrm{max}} - 1\right)
			= \alpha^{r+1} \omega_r(v) \gamma ~.
		\end{align*}
		For the second to last equality, we rewrite the infinite series as a geometric sum.
	\end{proof}

	The following lemma (with proof in Appendix~\ref{sec:proofs})
	shows that we can indeed iteratively improve
	the upper and lower bounds for each node $x \in V$:

	\begin{lemma}
		\label{lem:monotone}
		For each $x \in V$, $\ell_i(x)$ is non-decreasing in $i$ and
		$u_i(x)$ is non-increasing in $i$.
	\end{lemma}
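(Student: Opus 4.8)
The plan is to handle the two assertions separately, since they differ considerably in difficulty. For the lower bound in the directed case, $\ell_r(v) = \sum_{i=1}^{r}\omega_i(v)\alpha^i$ is simply a partial sum of a series with nonnegative terms, so $\ell_{r+1}(v) - \ell_r(v) = \alpha^{r+1}\omega_{r+1}(v) \geq 0$ and monotonicity is immediate. For the improved undirected lower bound $\ell_r(v) = \sum_{i=1}^{r}\omega_i(v)\alpha^i + \omega_r(v)\alpha^{r+1}$, I would first record that $\omega_{r+1}(v)\geq\omega_r(v)$ for $r\geq 1$: any walk of length $r\geq 1$ starting in $v$ ends at a node $x$ incident to the walk's final edge, so $x$ has degree at least $1$ in the undirected graph, and appending that edge gives an injection from length-$r$ walks into length-$(r+1)$ walks. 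Then $\ell_{r+1}(v) - \ell_r(v) = \alpha^{r+1}\bigl(\omega_{r+1}(v) - \omega_r(v)\bigr) + \alpha^{r+2}\omega_{r+1}(v) \geq 0$.

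The substantive part is the upper bound, and here I would compute $u_r(v) - u_{r+1}(v)$ directly from the definition. The shared prefix $\sum_{i=1}^{r}\alpha^i\omega_i(v)$ cancels, leaving $u_r(v) - u_{r+1}(v) = \alpha^{r+1}\bigl(\gamma\,\omega_r(v) - (1+\alpha\gamma)\,\omega_{r+1}(v)\bigr)$. The key algebraic observation is that $1+\alpha\gamma = \frac{1}{1-\alpha\deg_\mathrm{max}}$, and consequently $\gamma = \deg_\mathrm{max}(1+\alpha\gamma)$, which lets me factor the difference as $\alpha^{r+1}(1+\alpha\gamma)\bigl(\deg_\mathrm{max}\,\omega_r(v) - \omega_{r+1}(v)\bigr)$. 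At this point I invoke inequality~(\ref{eq:deg_ineq}) from the proof of Theorem~\ref{thm:correctness}, namely $\omega_{r+1}(v) \leq \deg_\mathrm{max}\,\omega_r(v)$, so the last factor is nonnegative; since $\alpha > 0$ and $1+\alpha\gamma = (1-\alpha\deg_\mathrm{max})^{-1} > 0$ under the hypothesis $\alpha < \frac{1}{\deg_\mathrm{max}}$, the whole expression is $\geq 0$, i.e. $u_{r+1}(v) \leq u_r(v)$.

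I do not expect a genuine obstacle: the only care needed is the bookkeeping in the cancellation step and applying the identity $\gamma = \deg_\mathrm{max}(1+\alpha\gamma)$ in the right place so that the degree inequality of Theorem~\ref{thm:correctness} plugs in cleanly. It is also worth noting explicitly that $\alpha$, each $\omega_i(v)$, and $1-\alpha\deg_\mathrm{max}$ are all nonnegative under the standing assumption, so no sign issues arise when multiplying inequalities together.
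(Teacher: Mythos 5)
Your proof is correct and follows essentially the same route as the paper: both compute the difference of consecutive upper bounds, apply the inequality $\omega_{r+1}(v)\leq\deg_\mathrm{max}\,\omega_r(v)$ from the proof of Theorem~\ref{thm:correctness}, and close with the identity $\gamma=\deg_\mathrm{max}(1+\alpha\gamma)$, which is just a rearrangement of the paper's $\left(\frac1\gamma+\alpha\right)\deg_\mathrm{max}=1$. Your explicit treatment of the improved undirected lower bound is a small addition beyond the paper's ``trivial'' remark, but it does not change the approach.
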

	
	Theorem~\ref{thm:correctness} requires us to choose $\alpha < \frac1{\deg_\mathrm{max}}$,
	which is a restriction compared to the more general
	requirement of $\alpha < \frac1{\sigma_\mathrm{max}}$.
	For our experiments in the later sections of this paper,
	we set $\alpha = \frac1{1+\deg_\mathrm{max}}$ in order to satisfy this condition.
	Aside from enabling us to apply the theorem, this choice of $\alpha$ has
	some additional advantages: First, because Theorem~\ref{thm:correctness} gives
	an upper bound on Eq.~(\ref{eq:katz}), Katz centrality is
	guaranteed to converge for this value of the $\alpha$
	parameter\footnote{This was already noticed by Katz~\cite{katz1953defn}
		and can alternatively be proven through linear algebra.}.
	$\deg_\mathrm{max}$ is also much easier to compute than
	$\sigma_\mathrm{max}$, an operation that is comparable in complexity
	to computing the Katz centrality
	itself\footnote{Indeed, the popular power iteration method to compute
		$\sigma_\mathrm{max}$ for real, symmetric, positive-definite matrices
		has a complexity of $\Omega(r\,|E|)$, where $r$ denotes a number
		of iterations.}.
	Finally, $\alpha = \frac1{1+\deg_\mathrm{max}}$ is widely-used in existing
	literature \cite{bonchi2012personal,foster2001iter}, with Foster \etal calling it the
	\enquote{generally-accepted default attenuation factor}~\cite{foster2001iter}.

	It is worth remarking (proof in Appendix~\ref{sec:proofs})
	that graphs exist for which the bound from Theorem~\ref{thm:correctness}
	is sharp:
	\begin{lemma}
		If $G$ is a complete graph, $u_i(x) = \katz(x)$ for all $x \in V$ and $i \in \mathbb{N}$.
	\end{lemma}

\subsection{Efficient rankings using per-node bounds}
	In the following, we state the description of our Katz algorithm for static graphs.
	As hinted earlier, the algorithm estimates Katz centrality by computing
	$u_r(v)$ and $\ell_r(v)$. These upper and lower bounds are
	iteratively improved by incrementing $r$ until the algorithm converges.
	
	To actually compute $\katz_r(v)$, we use the well-known fact that the number of walks of
	length $i$ starting in node $v$ is equal to the sum of the number of walks of length $i-1$
	starting in the neighbors of $v$, in other words:
	\begin{equation}
		\label{eq:basic_recurrence}
		\omega_i(v) = \sum_{v \to x \in E} \omega_{i-1}(x)~.
	\end{equation}
	Thus, if we initialize $\omega_1(v)$ to $\deg(v)$ for all $v \in V$, we can then
	repeatedly loop over the edges of $G$ and compute tighter and tighter lower bounds.

	We focus here on the top-$k$ convergence criterion.
	It is not hard to see how our
	techniques can be adopted to the other stopping criteria
	mentioned at the start of the previous subsection.
	To be able to efficiently detect convergence,
	the algorithm maintains a set of \emph{active} nodes.
	These are the nodes for which the lower and upper bounds have not yet converged.
	Initially, all nodes are active.
	Each node is \emph{deactivated} once it is $\epsilon$-separated from the $k$ nodes
	with highest lower bounds $\ell_r$.
	It should be noted that, because of Lemma \ref{lem:monotone},
	deactivated nodes will stay deactivated in all future iterations.
	Thus, for the top-$k$ criterion, it is sufficient
	to check whether (i) only $k$ nodes remain active and (ii) the remaining
	active nodes are $\epsilon$-separated from each other.
	This means that each iteration will require less work than its previous iteration.

	\begin{algorithm}[t] 
		\caption{Katz centrality bound computation for static graphs}
		\label{algo:topk}
		\begin{minipage}{0.475\textwidth}
			\begin{algorithmic}
				\footnotesize
				\State $\gamma \gets \deg_\mathrm{max}/(1 - \alpha \deg_\mathrm{max})$
				\smallskip
				\State Initialize $\katz_0(x) \gets 0 \quad \forall x \in V$
				\State Initialize $r \gets 0$ and $\omega_0(x) \gets 1 \quad \forall x \in V$
				\State Initialize set of active nodes: $M \gets V$
				\While{not \Call{converged}{\null}}
					\State Set $r \gets r+1$ and $\omega_{r}(x) \gets 0 \quad \forall x \in V$
					\ForAll{$v \in V$}
						\ForAll{$v \to u \in E$}
							\State $\omega_{r}(v) \gets \omega_{r}(v) + \omega_{r-1}(u)$
						\EndFor
						\State $\katz_r(v) \gets \katz_{r-1}(v) + \alpha^r \omega_r(v)$
						\If{$G$ undirected}
							\State $\ell_r(v) \gets \katz_r(v) + \alpha^{r+1} \omega_r(v)$
						\Else
							\State $\ell_r(v) \gets \katz_r(v)$
						\EndIf
						\State $u_r(v) \gets \katz_r(v)  + \alpha^{r+1} \omega_r(v) \gamma$
					\EndFor
				\EndWhile
			\end{algorithmic}
		\end{minipage}\hspace{0.05\textwidth}%
		\begin{minipage}{0.475\textwidth}
			\begin{algorithmic}
				\footnotesize
				\Function{converged}{\null}
					\State \Call{partialSort}{$M$, $k$, $\ell_r$, decreasing}
					\ForAll{$i \in \{k+1, \ldots, |V|\}$}
						\If{$u_r(M[i]) - \epsilon < \ell_r(M[k])$}
							\State $M \gets M \setminus \{v\}$
						\EndIf
					\EndFor
					\If{$|M| > k$}
						\State \Return $\mathrm{false}$
					\EndIf
					\ForAll{$i \in \{2, \ldots, \min(|M|, k)\}$}
						\If{$u_r(M[i]) - \epsilon \geq \ell_r(M[i - 1])$}
							\State \Return $\mathrm{false}$
						\EndIf
					\EndFor
					\State \Return $\mathrm{true}$
				\EndFunction
			\end{algorithmic}
		\end{minipage}
	\end{algorithm}

	Algorithm \ref{algo:topk} depicts the pseudocode of the algorithm.
	Computation of $\omega_r(v)$ is done by evaluating the
	recurrence from Eq.~(\ref{eq:basic_recurrence}).
	After the algorithm terminates,
	the $\epsilon$-separation property guarantees that
	the $k$ nodes with highest $\ell_r(v)$
	form a top-$k$ Katz centrality ranking (although $\ell_r(v)$
	does not necessarily equal the true Katz score).

	The \Call{converged}{} procedure in Algorithm~\ref{algo:topk}
	checks whether the top-$k$ convergence criterion is satisfied.
	In this procedure, $M$ denotes the set of active nodes.
	The procedure first partially sorts the elements of $M$ by decreasing
	lower bound $\ell_r$.
	After that is done, the first $k$ elements of $M$ correspond
	to the top-$k$ elements in the current ranking
	(which might not be correct yet).
	Note that it is not necessary to construct the entire
	ranking here; sorting just the top-$k$ nodes is sufficient.
	The procedure tries to deactivate nodes that cannot be in the top-$k$
	and afterwards checks if the remaining top-$k$ nodes
	are correctly ordered. These checks are performed by testing if the
	$\epsilon$-separation condition from Eq.~(\ref{eq:epsilon_sep}) is true.

	\subparagraph*{Complexity analysis.}
	The sequential worst-case time complexity of Algorithm~\ref{algo:topk}
	is $\mathcal{O}(r\,|E| + r\,\mathcal{C})$,
	where $r$ is the number of iterations and $\mathcal{C}$ is the complexity
	of the convergence checking procedure.
	It is easy to see that the loop over
	$V$ can be parallelized, yielding a complexity of
	$\mathcal{O}(r\,\frac{|V|}p \deg_{max} + r\,\mathcal{C})$
	on a parallel machine with $p$ processors.
	The complexity of \Call{converged}{}, the top-$k$ ranking convergence criterion,
	is dominated by the
	$\mathcal{O}(|V| + k \log k)$ complexity of partial sorting.
	Both the score and the pair criteria can be implemented in $\mathcal{O}(1)$.

	It should be noted that -- for the same solution quality --
	our algorithm converges at least as fast as
	the heuristic of Foster \etal that computes a Katz
	ranking without correctness guarantee.
	Indeed, the values of $\katz_r$ yield exactly the values
	that are computed by the heuristic. However, Foster \etals
	heuristic is unable to accurately assess the quality of
	its current solution and might thus perform too many or too
	few iterations.


\section{Updating Katz centrality in dynamic graphs}
	\label{sec:dyn}

	In this section, we discuss how our Katz centrality algorithm can be extended
	to compute Katz centrality rankings for dynamically
	changing graphs. We model those graphs as an initial graph that is
	modified by a sequence of edge insertions and edge deletions.
	We do not explicitly handle node insertions and deletions as those can
	easily be supported by adding enough isolated nodes to the initial graph.

	Before processing any edge updates, we assume that our algorithm
	from Section~\ref{sec:static} was first executed on the initial
	graph to initialize the values $\omega_i(x)$ for all $x \in V$.
	The dynamic graph algorithm needs to recompute $\omega_i(x)$
	for $i \in \{1, \ldots, r\}$, where $r$ is the number of iterations that
	was reached by the static Katz algorithm on the initial graph.
	The main observation here is that if an edge $u \to v$ is inserted into (or deleted from)
	the initial graph, $\omega_i(x)$ only changes for nodes $x$ in the vicinity
	of $u$. More precisely, $\omega_i(x)$ can only change if $u$ is reachable
	from $x$ in at most $i-1$ steps.

	Algorithm~\ref{algo:dynamic} depicts the pseudocode of our dynamic Katz algorithm.
	$\mathcal{I}$ denotes the set of edges to be inserted, while $\mathcal{D}$ denotes
	the set of edges to be deleted. We assume that 
	$\mathcal{I} \cap E = \emptyset$ and $\mathcal{D} \subseteq E$ before the algorithm.
	Effectively, the algorithm performs a breadth-first search (BFS) through the
	reverse graph of $G$ and updates $\omega_i$ for all nodes nodes that
	were reached in steps $1$ to $i$.

	\begin{algorithm}[t]
		\caption{Dynamic Katz update procedure}
		\label{algo:dynamic}
		\begin{minipage}{0.475\textwidth}
			\begin{algorithmic}
				\footnotesize
				\State $E \gets E \setminus \mathcal{D}$
				\State $S \gets \emptyset$,\ $T \gets \emptyset$
				\ForAll{$w \to v \in \mathcal{I} \cup \mathcal{D}$}
					\State $S \gets S \cup \{w\}$
					\State $T \gets T \cup \{v\}$
				\EndFor
				\ForAll{$i \in \{1, \ldots, r\}$}
					\State \Call{updateLevel}{i}
				\EndFor
				\ForAll{$w \in S$}
					\State Recompute $\ell_r(w)$ and $u_r(w)$ from $\katz_r(w)$
				\EndFor
				\ForAll{$w \in V$}
					\If{$u_r(w) \geq \min_{x \in M} \ell_r(x) - \epsilon$}
						\State $M \gets M \cup \{w\}$ \Comment{Reactivation}
					\EndIf
				\EndFor
				\State $E \gets E \cup \mathcal{I}$
				\While{not \Call{converged}{\null}}
					\State Run more iterations of static algorithm
				\EndWhile
			\end{algorithmic}
		\end{minipage}\hspace{0.05\textwidth}%
		\begin{minipage}{0.475\textwidth}
			\begin{algorithmic}
				\footnotesize
				\Procedure{updateLevel}{i}
					\ForAll{$v \in S \cup T$}
						\State $\omega_i'(v) \gets \omega_i(v)$
					\EndFor
					\ForAll{$v \in S$}
						\ForAll{$w \to v \in E$}
							\State $S \gets S \cup \{w\}$
							\State $\omega'_i(w) \gets \omega'_i(w)
									- \omega_{i-1}(v) + \omega'_{i-1}(v)$
						\EndFor
					\EndFor
					\ForAll{$w \to v \in \mathcal{I}$}
						\State $\omega'_i(w) \gets \omega'_i(w) + \omega'_{i-1}(v)$
					\EndFor
					\ForAll{$w \to v \in \mathcal{D}$}
						\State $\omega'_i(w) \gets \omega'_i(w) - \omega_{i-1}(v)$
					\EndFor
					\ForAll{$w \in S$}
						\State $\katz_i(w) \gets \katz_i(w)
								- \alpha^i \omega_i(w) + \alpha^i \omega'_i(w)$
					\EndFor
				\EndProcedure
			\end{algorithmic}
		\end{minipage}
	\end{algorithm}

	After the update procedure terminates, the new upper and lower bounds
	can be computed from $\katz_r$ as in the static algorithm.
	We note that $\omega'_i(x)$ matches exactly the value
	of $\omega_i(x)$ that the static Katz algorithm would compute for the modified graph.
	Hence, the dynamic algorithm reproduces the correct values
	of $\katz_r(x)$ and also of $\ell_r(x)$ and $u_r(x)$ for all $x \in V$.
	In case of the top-$k$ convergence criterion,
	some nodes might need to be \emph{reactivated} afterwards: Remember
	that the top-$k$ criterion maintains a set $M$ of active nodes.
	After edge updates are processed, it can happen that there are nodes $x$ that
	are not $\epsilon$-separated from all nodes in $M$ anymore.
	Such nodes $x$ need to be added to $M$ in order to obtain a
	correct ranking. The ranking itself can then be updated by sorting
	$M$ according to decreasing $\ell_r$.
	
	It should be noted that there is another related corner case:
	Depending on the convergence criterion, it can happen that the algorithm
	is not converged anymore even after nodes have been reactivated.
	For example, for the top-$k$ criterion, this is the case if the nodes
	in $M$ are not $\epsilon$-separated from each other anymore.
	Thus, after the dynamic update we have to perform a convergence check
	and potentially run additional iterations of the static algorithm
	until it converges again.

	Assuming that no further iterations of the static algorithms
	are necessary, the complexity of the update procedure
	is $\mathcal{O}(r\,|E| + \mathcal{C})$, where $\mathcal{C}$ is the
	complexity of convergence checking (see Section~\ref{sec:static}). In reality, however,
	the procedure can be expected to perform much better:
	Especially for the first few iterations, we expect the set $S$
	of vertices visited by the BFS to be much smaller than $|V|$.
	However, this implies that effective parallelization of the dynamic graph algorithm is
	more challenging than the static counterpart.
	We mitigate this problem, by aborting the BFS
	if $|S|$ becomes large and
	just update the $\omega_i$ scores unconditionally for all nodes.
	
	Finally, it is easy to see that the algorithm can be modified to update $\omega$ in-place
	instead of constructing a new $\omega'$ matrix. For this optimization,
	the algorithm needs to save the value of $\omega_i$ for all nodes of $S$
	before overwriting it, as this value is required for iteration $i+1$.
	For readability, we omit this modification in the pseudocode.


\section{Experiments}
	\label{sec:exp}

	\begin{table}[t]
		\centering
		\begin{threeparttable}
			\caption{Performance of the Katz algorithm, ranking criterion}
			\label{tbl:basic_perf}
			\begin{tabular}{rrrr@{\quad}||rrrr}
				$\epsilon$ & $r$\tnote{a} & Runtime\tnote{a} & Separation\tnote{b} &
				$\epsilon$ & $r$\tnote{a} & Runtime\tnote{a} & Separation\tnote{b} \\\hline
				$10^{-1}$ & 2.3 & 33.51 s & 96.189974 \% &
				$10^{-7}$ & 7.2 & 78.74 s & 99.994959 \% \\
				$10^{-2}$ & 3.0 & 42.81 s & 98.478250 \% &
				$10^{-8}$ & 7.9 & 83.28 s & 99.998866 \% \\
				$10^{-3}$ & 3.8 & 51.59 s & 99.264726 \% &
				$10^{-9}$ & 8.6 & 85.10 s & 99.998886 \% \\
				$10^{-4}$ & 4.8 & 65.99 s & 99.391884 \% &
				$10^{-10}$ & 9.2 & 89.03 s & 99.998889 \% \\
				$10^{-5}$ & 5.7 & 71.53 s & 99.992908 \% &
				$10^{-11}$ & 9.8 & 99.43 s & 99.998934 \% \\
				$10^{-6}$ & 6.5 & 70.59 s & 99.994861 \% &
				$10^{-12}$ & 10.4 & 96.86 s & 99.998934 \% \\
				Foster & 11.2 & 105.03 s & - &
				CG & 12.0 & 117.24 s & - \\
			\end{tabular}
			\begin{tablenotes}
				\item[a] Average over all instances. $r$ is the number of iterations.
				\item[b] Fraction of node pairs that are separated
					(and not only $\epsilon$-separated).
					Lower bound on the correctly ranked pairs.
					This is the geometric mean over all graphs.
			\end{tablenotes}
		\end{threeparttable}
	\end{table}
	
\subparagraph*{Implementation details}
	The new algorithm in this paper is hardware independent and
	as such we can implement it on different types of hardware with
	the right type of software support.
	Specifically, our dynamic Katz centrality requires a dynamic graph
	data structure. On the CPU we use NetworKit~\cite{staudt2016networkit};
	on the GPU we use Hornet\footnote{Hornet can be found at \url{https://github.com/hornet-gt},
		while NetworKit is available from \url{https://github.com/kit-parco/networkit}.
		Both projects are open source, including the implementations of our new algorithm.}.
	The Hornet data structure is architecture independent, though at time of writing
	only a GPU implementation exists.
	
	NetworKit consists of an optimized C++ network analysis library
	and bindings to access this library from Python.
	NetworKit contains parallel shared-memory implementations of
	many popular graph algorithms and can handle networks with
	billions of edges.

	The Hornet \cite{green-hornet},  an efficient extension to the cuSTINGER~\cite{green2016custinger} data structure, is a dynamic graph and matrix data structure designed for large scale networks and to support graphs with trillions of vertices. In contrast to cuSTINGER, Hornet better utilizes memory, supports memory reclamation, and can be updated almost ten times faster.

	In our experiments, we compare our new algorithm to
	Foster \etals heuristic and a conjugate gradient (CG) algorithm
	(without preconditioning)
	that solves Eq.~(\ref{eq:la}).
	The performance of CG could be possibly improved by employing
	a suitable preconditioner;
	however, we do not expect this to change our
	results qualitatively.
	Both of these algorithms were implemented
	in NetworKit and share the graph data structure with our
	new Katz implementation.
	We remark that for the static case,
	both CG and our Katz algorithm could be implemented
	on top of a CSR matrix data structure to improve the data locality
	and speed up the implementation.

\subparagraph*{Experimental setup}
	We evaluate our algorithms on a set of complex networks.
	The networks originate from diverse
	real-world applications and were taken from SNAP~\cite{leskovec2016snap}
	and KONECT~\cite{kunegis2013konect}.
	Details about the exact instances that we used can be found in
	Appendix~\ref{sec:instsel}.
	In order to be able to compare our algorithm to the CG algorithm,
	we turn the directed graphs in this test set into undirected graphs
	by ignoring edge directions. This ensures that the adjacency matrix
	is symmetric and CG is applicable.
	Our new algorithm itself would be able to handle directed graphs just fine.

	All CPU experiments ran on a machine with dual-socket Intel Xeon E5-2690
	v2 CPUs with 10 cores per
	socket\footnote{Hyperthreading was disabled for the experiments.} and 128 GiB RAM.
	Our GPU experiments are conducted on an NVIDIA P100 GPU which has
	56 Streaming Multiprocessors (SMs) and 64 Streaming Processors (SPs) per SM
	(for a total of 3584 SPs) and has 16GB of HBM2 memory.
	To effectively use the GPU, the number of active threads need to be
	roughly 8 times larger than the number of SPs.
	The Hornet framework has an API that enables such parallelization
	(with load balancing) such that the user only needs to write a few lines of code.

\subsection{Evaluation of the static Katz algorithm}
	In a first experiment, we evaluate the running time of our static Katz algorithm.
	In particular, we compare it to the running time of the linear algebra
	formulation (\ie the CG algorithm) and Foster \etals heuristic.
	We run CG until the residual is less than $10^{-15}$ to obtain a
	nearly exact Katz ranking
	(\ie up to machine precision; later in this section, we compare to CG runs with
	larger error tolerances).
	For Foster's heuristic, we use an error tolerance of $10^{-9}$,
	which also yields an almost exact ranking.
	For our own algorithm, we use the ranking convergence criterion (see Section~\ref{sec:static})
	and report running times
	and the quality of our correctness guarantees for different values of $\epsilon$.
	All algorithms in this experiment ran in single-threaded mode.

	Table~\ref{tbl:basic_perf} summarizes the results of the evaluation.
	The fourth column of Table~\ref{tbl:basic_perf} states the fraction
	of separated pairs of nodes.
	This value represents
	a lower bound on the correctness of ranking. Note
	that pairs of nodes that have the same Katz score
	will never be separated. Indeed, this seems to be the case for
	about 0.001\% of all pairs of nodes (as they are never separated, not even
	if $\epsilon$ is very low). Taking this into account, we can
	see that our algorithm already computes the correct ranking
	for $99\%$ of all pairs of nodes at $\epsilon = 10^{-3}$.
	At this $\epsilon$, our
	algorithm outperforms the other Katz algorithms considerably.
	
	\begin{figure}[t]
		\includegraphics{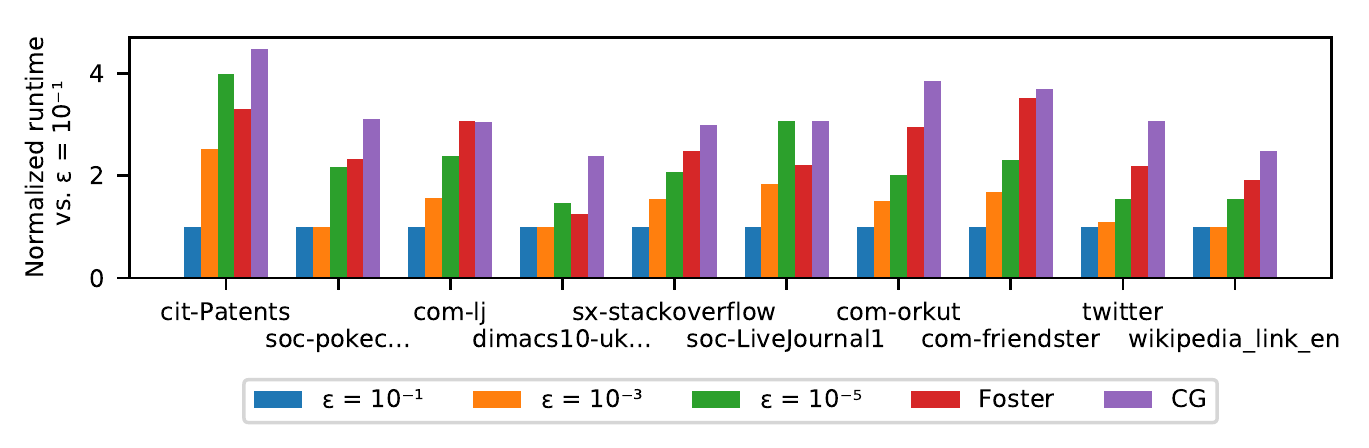}
		\caption{Katz performance on individual instances}
		\label{fig:ranktime}
	\end{figure}

	Furthermore, Table~\ref{tbl:basic_perf} shows that
	the average running time of our algorithm
	is smaller than the running time of the Foster \etal and CG algorithms.
	However, the graphs in our instance set vastly differ in
	size and originate from different applications; thus, the average
	running time alone does not give good indication for performance
	on individual graphs.
	In Figure~\ref{fig:ranktime} we report running times of
	our algorithm for the ten largest individual instances. $\epsilon = 10^{-1}$ is
	taken as baseline and the running times of all other algorithms
	are reported relative to this baseline.
	In the $\epsilon \leq 10^{-3}$ setups, our Katz
	algorithm outperforms the CG and Foster \etal algorithms on
	all instances. Foster \etals algorithm is faster than our
	algorithm for $\epsilon = 10^{-5}$ on three out of ten instances.
	On the depicted instances, CG is never faster than our algorithm,
	although it can outperform our algorithm on some small instances
	and for very low $\epsilon$.

	Finally, in Figure~\ref{fig:topk}, we present results of our Katz
	algorithm while using the top-$k$ convergence criterion. We
	report (geometric) mean speedups relative to the full ranking criterion.
	The figure also includes the approach of Nathan \etal~\cite{nathan2017guarantee}.
	Nathan \etal conducted experiments on real-world graphs and concluded
	that solving Eq.~(\ref{eq:la}) with an error tolerance of $10^{-4}$
	in practice almost always results in the correct top-$100$ ranking.
	Thus, we run CG with that error tolerance.
	However, it turns out that this approach is barely faster than
	our full ranking algorithm. In contrast to that, our top-$k$ algorithm
	yields decent speedups for $k \leq 1000$.

\subsection{Evaluation of the dynamic Katz algorithm}
	\begin{figure}[t] 
		\begin{minipage}{0.475\textwidth}
			\includegraphics{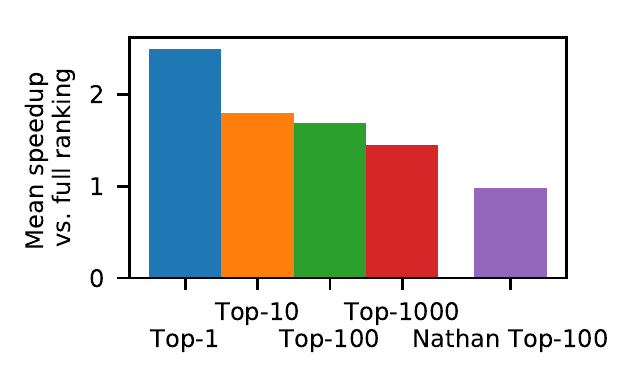}
			\caption{Top-$k$ speedup over full ranking}
			\label{fig:topk}
		\end{minipage}\hspace{0.05\textwidth}%
		\begin{minipage}{0.475\textwidth}
			\includegraphics{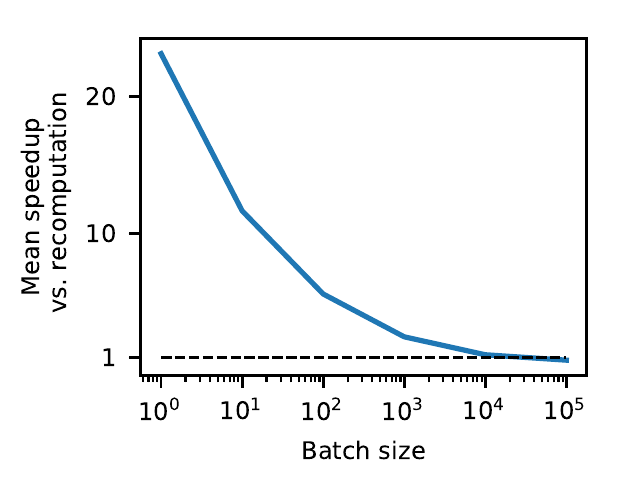}
			\caption{Dynamic update performance}
			\label{fig:dynspeedup}
		\end{minipage}
	\end{figure}
	
	In our next experiment, we evaluate the performance of
	our dynamic Katz algorithm to compute top-$1000$ rankings
	using $\epsilon = 10^{-4}$. We select $b$ random edges
	from the graph, delete them in a single batch and run our dynamic
	update algorithm on the resulting graph. We vary the batch size $b$
	from $10^0$ to $10^5$ and report the running times of the dynamic
	graph algorithm relative to recomputation.
	Similar to the previous experiment, we run the algorithms in
	single-threaded mode.
	Note that while we only show results for edge
	deletion, edge insertion is completely symmetric in Algorithm~\ref{algo:dynamic}.

	Figure~\ref{fig:dynspeedup} summarizes the results of the experiment.
	For batch sizes $b \leq 1000$, our dynamic algorithm offers a considerable
	speedup over recomputation of Katz centralities.
	As many of the graphs in our set of instances have a small diameter,
	for larger batch sizes ($b > 10000$), almost all of the vertices
	of the graph need to be visited during the dynamic update procedure.
	Hence, the dynamic update algorithm is slower than recomputation
	in these cases.

\subsection{Real-time Katz computation using parallel CPU and GPU implementations}
	Our last experiment concerns the practical running time
	and scalability of efficient parallel CPU and GPU implementations
	of our algorithm.
	For this, we compare the running times of our shared-memory
	CPU implementation with different numbers of cores.
	Furthermore, we report results of our GPU implementation.
	Because of GPU memory constraints, we could not
	process all of the graphs on the GPU. Hence, we provide the results
	of this experiment only for a subset of graphs
	that do fit into the memory of our GPU.
	The graphs in this subset have between 1.5 million and
	120 million edges.
	We use the top-$10000$ convergence criterion with $\epsilon = 10^{-6}$.

	Figure~\ref{fig:partime} depicts the results of the evaluation.
	In this figure, we consider the sequential CPU implementation as a baseline.
	We report the relative running times of the 2, 4, 8 and 16 core CPU configurations,
	as well as the GPU configuration, to this baseline.
	While the parallel CPU configurations yield moderate speedups over
	the sequential implementation, the GPU gives a significant speedup
	over the 16 core CPU
	configuration\footnote{At time of writing, our CPU implementation
		uses a sequential algorithm for partial sorting; this is
		a bottleneck in the parallel CPU configurations.}.
	Even compared to a 20 core CPU configuration (not depicted in the
	plots; see Appendix~\ref{sec:addexps}), the GPU achieves a (geometric)
	mean speedup of $10\times$.

	The CPU implementation achieves running times in the
	range of seconds; however, our GPU implementation reduces this running time
	to a fraction of a second.
	In particular, the GPU running time
	varies between 20 ms (for roadNet-PA) and 213 ms (for com-orkut),
	enabling near real-time computation of Katz centrality even for graphs with
	hundreds of millions of edges.
	\todo{Real-time argument could be improved.}

	\begin{figure}[t]
		\includegraphics{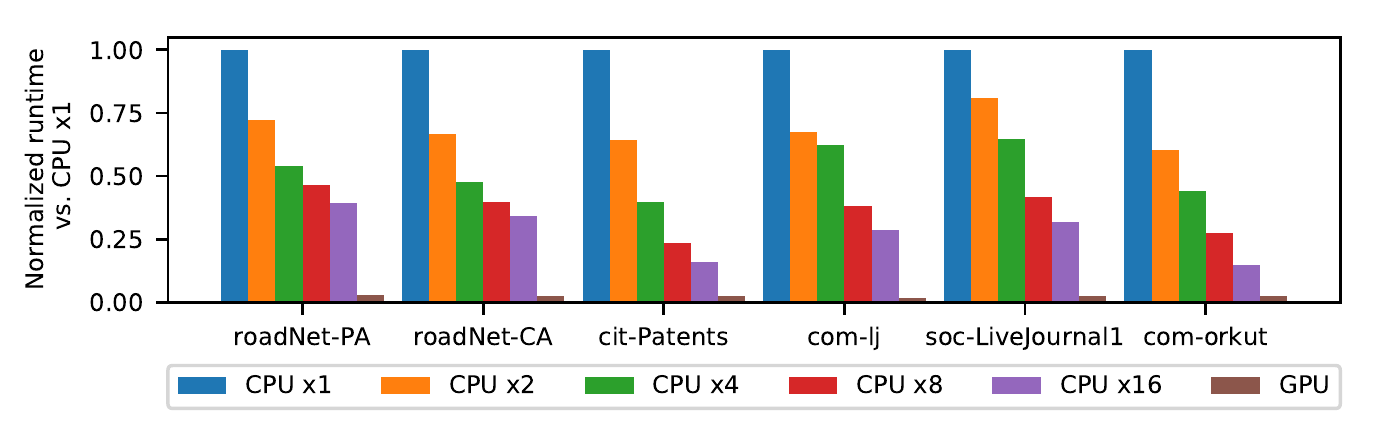}
		\caption{Scalability of parallel CPU and GPU implementations}
		\label{fig:partime}
	\end{figure}


\section{Conclusion}
	In this paper, we have presented an algorithm for Katz centrality that
	computes upper and lower bounds on the Katz score of
	individual nodes. Experiments demonstrated that our algorithm
	outperforms both linear algebra formulations and approximation
	algorithms, with speedups between 150\% and 350\%
	depending on desired correctness guarantees.

	Future work could try to provide stricter per-node bounds for Katz centrality
	to further decrease the number of iterations that the algorithm requires
	to convergence. In particular, it would be desirable to prove per-node bounds
	that do not rely on $\alpha < 1/\deg_\mathrm{max}$. On the implementation side,
	our new algorithm could be formulated in the language of GraphBLAS~\cite{kepner2016gblas}
	to enable it to run on a variety of upcoming software and hardware architectures.


\bibliographystyle{plainurl}
\bibliography{literature}


\appendix

\section{Technical proofs}
	\label{sec:proofs}
	\setcounter{theorem}{2} 

	We quickly review some technical proofs that were omitted from the main
	text of this paper.

	\begin{lemma}
		For each $x \in V$, $\ell_i(x)$ is non-decreasing in $i$ and
		$u_i(x)$ is non-increasing in $i$.
	\end{lemma}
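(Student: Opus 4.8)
The plan is to prove both monotonicity statements by inspecting the difference between two consecutive iterates and showing it has the correct sign, reusing the inequality $\omega_{i+1}(v) \le \deg_\mathrm{max}\,\omega_i(v)$ from Eq.~(\ref{eq:deg_ineq}) in the proof of Theorem~\ref{thm:correctness}. Since $\ell_i$ and $u_i$ are defined only for $i \ge 1$, the claim is understood over those indices, and a single-step comparison plus a trivial induction yields full monotonicity.

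For the lower bound in the directed case, $\ell_{r+1}(v) - \ell_r(v) = \alpha^{r+1}\,\omega_{r+1}(v) \ge 0$, since walk counts are nonnegative; nothing more is needed. For the undirected lower bound I would additionally invoke the observation already made in the text: any walk of length $r$ starting at $v$ extends to a walk of length $r+1$ starting at $v$ by repeating its last edge in reverse, so $\omega_{r+1}(v) \ge \omega_r(v)$. Substituting the undirected definition gives $\ell_{r+1}(v) - \ell_r(v) = \alpha^{r+1}\bigl((1+\alpha)\,\omega_{r+1}(v) - \omega_r(v)\bigr) \ge \alpha^{r+1}\bigl((1+\alpha)\,\omega_r(v) - \omega_r(v)\bigr) = \alpha^{r+2}\,\omega_r(v) \ge 0$.

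For the upper bound, I would compute $u_{r+1}(v) - u_r(v) = \alpha^{r+1}\bigl(\omega_{r+1}(v)(1+\alpha\gamma) - \gamma\,\omega_r(v)\bigr)$. The key algebraic step is that, with $\gamma = \deg_\mathrm{max}/(1-\alpha\deg_\mathrm{max})$ and $\alpha < 1/\deg_\mathrm{max}$, one has $1 + \alpha\gamma = 1/(1-\alpha\deg_\mathrm{max}) > 0$, hence $\gamma/(1+\alpha\gamma) = \deg_\mathrm{max}$. Dividing through, $u_{r+1}(v) - u_r(v) \le 0$ is therefore equivalent to $\omega_{r+1}(v) \le \deg_\mathrm{max}\,\omega_r(v)$, which is precisely Eq.~(\ref{eq:deg_ineq}). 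So $u_i(x)$ is non-increasing.

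There is no real obstacle here; the proof is essentially bookkeeping. The only points that require a little care are the extendability argument for the undirected lower bound and the identity $\gamma/(1+\alpha\gamma) = \deg_\mathrm{max}$, which is exactly what ties the upper-bound difference back to the degree inequality already established for Theorem~\ref{thm:correctness}.
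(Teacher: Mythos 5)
Your proof is correct and follows essentially the same route as the paper's: compare consecutive iterates, invoke the inequality $\omega_{i+1}(v) \le \deg_\mathrm{max}\,\omega_i(v)$ from Eq.~(\ref{eq:deg_ineq}), and close with the algebraic identity relating $\gamma$ and $\deg_\mathrm{max}$ (your $\gamma/(1+\alpha\gamma)=\deg_\mathrm{max}$ is just a rearrangement of the paper's $(\tfrac1\gamma+\alpha)\deg_\mathrm{max}=1$). The only difference is that you spell out the undirected lower-bound case via the walk-extension observation, which the paper's proof subsumes under ``trivial''; that is a welcome but inessential elaboration.
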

	\begin{proof}
		The statement is trivial for $\ell_i(x)$. For $u_i(x)$ consider:
		\begin{align*}
			u_{i+1}(x) - u_{i}(x)
			&= \alpha^{i+1} \omega_{i+1}(x) + \alpha^{i+2} \omega_{i+1}(x) \gamma
					- \alpha^{i+1} \omega_i(x) \gamma \\
			&\leq \alpha^{i+1} \left(\frac1\gamma + \alpha\right) \deg_\mathrm{max} \omega_i(x) \gamma
					- \alpha^{i+1} \omega_i(x) \gamma
		\end{align*}
		Here, the inequality follows from Eq.~(\ref{eq:deg_ineq})
		in the proof of Theorem~\ref{thm:correctness}.
		To complete the proof it is sufficient to notice that
		$\left(\frac1\gamma + \alpha\right)\deg_\mathrm{max} = 1$.
	\end{proof}

	\begin{lemma}
		If $G$ is a complete graph, $u_i(x) = \katz(x)$ for all $x \in V$ and $i \in \mathbb{N}$.
	\end{lemma}
	\begin{proof}
		Consider the complete graph with $n$ vertices. Then $\omega_i(x) = (n-1)^i$
		for all $x \in V$.
		Let $\delta > 0$ be a constant so that $\alpha = \frac\delta{\deg_\mathrm{max}} = \frac\delta{n - 1}$.
		Eq.~(\ref{eq:katz}) does not converge for $\delta \geq 1$. On the other hand,
		for $\delta < 1$, the Katz centrality is given by $\katz(x) = \frac\delta{1-\delta}$.
		A short calculation (\ie rewriting the partial sum of the geometric series in $u_i(x)$)
		shows that Theorem~\ref{thm:correctness} yields the upper bound
		$u_i(x) = \katz(x)$ for all $i \in \mathbb{N}$ and $x \in V$.
	\end{proof}

\section{Instance selection}
	\label{sec:instsel}
	
	Table~\ref{tbl:instsel} states details about the instances that
	were used in our experiments in Section~\ref{sec:exp}.
	$|V|$ and $|E|$ refers to the number of vertices and edges
	in the original data (before transformation to undirected graphs).
	
	\begin{table}[h]
		\caption{Details of instances used in experimental section}
		\label{tbl:instsel}
		\centering
		\begin{tabular}{llrr}
			Name & Origin & $|V|$ & $|E|$ \\\hline
roadNet-PA & Road & 1,088,092 & 1,541,898 \\
roadNet-CA & Road & 1,965,206 & 2,766,607 \\
cit-Patents & Citation & 3,774,768 & 16,518,948 \\
soc-pokec-relationships & Social & 1,632,803 & 30,622,564 \\
com-lj & Social & 3,997,962 & 34,681,189 \\
dimacs10-uk-2002 & Road & 23,947,347 & 57,708,624 \\
sx-stackoverflow & Q\&A & 2,601,977 & 63,497,050 \\
soc-LiveJournal1 & Social & 4,847,571 & 68,993,773 \\
com-orkut & Social & 3,072,441 & 117,185,083 \\
com-friendster & Social & 65,608,366 & 437,217,424 \\
twitter & Social & 41,652,230 & 1,468,365,182 \\
wikipedia\_link\_en & Link & 13,593,032 & 1,806,067,135 \\
		\end{tabular}
	\end{table}

\section{Additional experimental data}
	\label{sec:addexps}
	For the graphs from Figure~\ref{fig:partime}, we present running
	times and exact speedups in Table~\ref{tbl:exactpar}.
	Specifically, we report absolute running times for the sequential baseline
	and speedups for the parallel and GPU implementations. Speedups
	are relative to the sequential baseline.

	\begin{table}[h]
		\caption{Absolute running times and parallel speedups for GPU instances}
		\label{tbl:exactpar}
		\centering
		\begin{tabular}{lrrrrrrr}
				& \multicolumn{7}{c}{Number of cores} \\
			Name & Sequential & 2 & 4 & 8 & 16 & 20 & GPU \\ \hline
roadNet-PA & 670 ms & 1.39$\times$ & 1.85$\times$ & 2.16$\times$ & 2.55$\times$ & 2.18$\times$ & 33.50$\times$ \\
roadNet-CA & 1018 ms & 1.50$\times$ & 2.09$\times$ & 2.51$\times$ & 2.93$\times$ & 2.69$\times$ & 40.72$\times$ \\
cit-Patents & 4751 ms & 1.55$\times$ & 2.52$\times$ & 4.23$\times$ & 6.19$\times$ & 6.41$\times$ & 41.68$\times$ \\
com-lj & 3916 ms & 1.48$\times$ & 1.60$\times$ & 2.61$\times$ & 3.51$\times$ & 4.07$\times$ & 56.75$\times$ \\
soc-LiveJournal1 & 3710 ms & 1.24$\times$ & 1.55$\times$ & 2.39$\times$ & 3.14$\times$ & 4.40$\times$ & 42.64$\times$ \\
com-orkut & 8205 ms & 1.65$\times$ & 2.26$\times$ & 3.64$\times$ & 6.69$\times$ & 6.94$\times$ & 38.52$\times$ \\
		\end{tabular}
	\end{table}

\end{document}